\newtheorem{thm}{Theorem}
\newtheorem{prop}[thm]{Proposition}
\newtheorem{lem}[thm]{Lemma}
\theoremstyle{remark}
\newtheorem*{rem}{Remark}
\newcommand{\assign}{\leftarrow}
\newcommand{\ZZ}{\mathbf{Z}}
\newcommand{\CC}{\mathbf{C}}
\newcommand{\CCxx}{\CC \llbracket x \rrbracket}
\newcommand{\fourier}{\mathcal{F}}
\begin{document}

\title[Faster algorithms for the square root and reciprocal]{Faster algorithms for the square root and reciprocal of power series}
\author{David Harvey}

\begin{abstract}
We give new algorithms for the computation of square roots and reciprocals of power series in $\CCxx$. If $M(n)$ denotes the cost of multiplying polynomials of degree $n$, the square root to order $n$ costs $(1.333\ldots + o(1)) M(n)$ and the reciprocal costs $(1.444\ldots + o(1)) M(n)$. These improve on the previous best results, respectively $(1.8333\ldots + o(1)) M(n)$ and $(1.5 + o(1)) M(n)$.
\end{abstract}

\maketitle

\section{Introduction}
\label{sec:intro}

It has been known for some time that various operations on power series, such as division, reciprocal, square root, and exponentiation, may be performed in a constant multiple of the time required for a polynomial multiplication of the same length. More recently the focus has been on improving the constants. A wealth of historical and bibliographical information tracking the downward progress of these constants may be found in \cite{bernstein} and \cite{multapps}. In this paper we present results that improve on the best known constants for the \emph{square root} and \emph{reciprocal} operations.

Let $M(n)$ denote the cost of multiplying two polynomials in $\CC[x]$ of degree less than $n$. By `cost' or `running time' we always mean number of ring operations in $\CC$. We assume FFT-based multiplication throughout, so that $M(n) = O(n \log n)$.

Let $f \in \CCxx$, $f = 1 \bmod x$. There are two algorithms for computing $f^{-1} \bmod x^n$ that achieve the previously best known running time bound of $(1.5 + o(1)) M(n)$. The first is that of Sch\"onhage \cite[Theorem 2]{schonhage}. If $g$ is an approximation to $f^{-1}$ of length $k$, then the second-order Newton iteration $g' = (2g - g^2(f \bmod x^{2k}))$ yields an approximation to $f^{-1}$ of length $2k$. Sch\"onhage observed that it suffices to compute $g^2 (f \bmod x^{2k})$ modulo $x^{3k} - 1$, which can be achieved by two forward FFTs and one inverse FFT of length $3k$. Iterating this process, the running time bound follows easily. Bernstein's `messy' algorithm \cite[p.~10]{bernstein} is more complicated. Roughly speaking, he splits the input into blocks of consecutive coefficients, and applies a second-order Newton iteration at the level of blocks. This blocking strategy allows transforms of blocks to be reused between iterations.

Our new reciprocal algorithm may be viewed as a simultaneous generalization of Bernstein's reciprocal algorithm and van der Hoeven's division algorithm \cite[p.~6]{vdh}. The main innovation is the use of a third-order Newton iteration, whose additional term is computed essentially free of charge, leading to a running time of $(1.444\ldots + o(1)) M(n)$ (Theorem \ref{thm:recip}). Although this is only a small improvement, it is interesting theoretically because the `nice' bound $(1.5\ldots + o(1)) M(n)$, achieved by two quite different algorithms, had been a plausible candidate for the optimal bound for almost ten years. Furthermore, the methods presented in this paper suggest that $(1.333\ldots + o(1)) M(n)$ may be attainable (see the final remark in Section \ref{sec:recip}).

For the square root, there are again two contenders for the previously best known bound of $(1.8333\ldots + o(1)) M(n)$. Bernstein computes the square root and reciprocal square root together, alternately extending approximations of each \cite[p.~9]{bernstein}. Hanrot and Zimmermann first compute the reciprocal square root to half the target precision, using a technique similar to Sch\"onhage's, and then apply a different iteration at the last step to obtain the square root \cite{hz}. (They claim only 1.91666\ldots, but there is an error in their analysis; the cost of line 3 of Algorithm `SquareRoot' is $M(n)/3$, not $M(n)/2$.)

Our new square root algorithm achieves $(1.333\ldots + o(1)) M(n)$ (Theorem \ref{thm:sqrt}). It is quite different to both of the above algorithms. It operates on blocks of coefficients, and may be viewed as a straightforward adaptation of van der Hoeven's division algorithm to the case of extracting square roots.

For simplicity, in this paper we only discuss the case of $\CCxx$. It seems likely that the algorithms may also be adapted to achieve the same constants in the case of power series over an arbitrary ring, and also in the case of arbitrary-precision integers or real numbers, but we have not checked the details.

\section{Notation and complexity assumptions}

The Fourier transform $\fourier_n(g) \in \CC^n$ of a polynomial $g \in \CC[x]$ is defined by $(\fourier_n(g))_j = g(e^{2\pi ij/n})$. If $\deg g < n$, we denote by $T(n)$ the cost of computing $\fourier_n(g)$ from $g$, or of computing $g$ from $\fourier_n(g)$.

If $g_1, g_2 \in \CC[x]$ and $\deg g_i < n$, the cyclic convolution $g_1 g_2 \bmod x^n - 1$ may be computed by evaluating $\fourier_n^{-1} (\fourier_n(g_1) \fourier_n(g_2))$, where the Fourier transforms are multiplied componentwise. The running time is $3T(n) + O(n)$. To obtain the ordinary product $g_1 g_2$ one may compute $g_1 g_2 \bmod x^{2n'} - 1$ for any $n' \geq n$, leading to the estimate $M(n) = 3T(2n') + O(n')$. While it is known that $T(n) = O(n \log n)$ for all $n$, for sufficiently smooth $n$ the implied big-$O$ constant may be smaller than the worst case, and one should choose $n'$ to take advantage of this. We therefore assume that we have available a set $S \subseteq \ZZ^+$ with the following properties: first, that $T(2n) = (1/3 + o(1)) M(n)$ for $n \in S$, and second, that the ratio of successive elements of $S$ approaches $1$. The choice of $S$ will depend on exactly which FFT algorithms are under consideration. For example, Bernstein describes a particular class of FFT algorithms for which the above properties hold with $S = \{2^k m: \text{$m$ odd and $k \geq m^2 - 1$}\}$ \cite[p.~5]{bernstein}. Following Bernstein, we call elements of $S$ \emph{ultrasmooth} integers.

If $g, h \in \CC[x]$, $\deg g < 2n$, $\deg h < n$, we denote by $g \rtimes_n h$ the middle product of $g$ and $h$. That is, if $gh = p_0 + p_1 x^n + p_2 x^{2n}$ where $p_i \in \CC[x]$, $\deg p_i < n$, then by definition $g \rtimes_n h = p_1$. Note that $gh = (p_0 + p_2) + p_1 x^n \bmod x^{2n} - 1$, so $g \rtimes_n h$ may be computed by evaluating $\fourier_{2n}^{-1} (\fourier_{2n}(g) \fourier_{2n}(h))$ and discarding the first half of the output. See \cite{tellegen} and \cite{hqz} for more information about the middle product.

In the algorithms given below, we fix a block size $m \geq 1$, and for $f \in \CCxx$, we write $f = f_{[0]} + f_{[1]} X + f_{[2]} X^2 + \cdots$, where $X = x^m$ and $\deg f_{[i]} < m$.

\section{Blockwise multiplication of power series}

Our main tool is a technique for multiplying power series, described in the proof of Lemma \ref{lem:block-product} below. Bernstein used a similar idea in \cite[p.~10]{bernstein}. We follow van der Hoeven's more recent approach \cite{vdh}, which uses the middle product to obtain a neater algorithm.

\begin{lem}
\label{lem:block-product}
Let $f, g \in \CCxx$ and $k \geq 0$. Given as input $\fourier_{2m}(f_{[0]}), \ldots, \fourier_{2m}(f_{[k]})$ and $\fourier_{2m}(g_{[0]}), \ldots, \fourier_{2m}(g_{[k]})$, we may compute $(fg)_{[k]}$ in time $T(2m) + O(m(k+1))$.
\end{lem}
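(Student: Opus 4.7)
The plan is to reduce the computation of $(fg)_{[k]}$ to a single inverse FFT of length $2m$. Write $h_s := \sum_{i+j=s} f_{[i]} g_{[j]}$, a polynomial in $x$ of degree less than $2m-1$, and split it as $h_s = p_0^{(s)} + p_1^{(s)} x^m$ with $\deg p_0^{(s)}, \deg p_1^{(s)} < m$. The $X^s$-block of $fg$ contains $p_0^{(s)}$ together with the carry $p_1^{(s-1)}$ from the preceding block, so
$$(fg)_{[k]} = p_0^{(k)} + p_1^{(k-1)},$$
with the convention $h_{-1} = 0$. Hence it suffices to recover this sum.

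The key idea is to pack both $h_k$ and $h_{k-1}$ into a single length-$2m$ cyclic convolution. Let $u := h_k + x^m h_{k-1} \bmod x^{2m} - 1$. Multiplying $h_{k-1}$ by $x^m$ and reducing modulo $x^{2m} - 1$ swaps its low and high halves, yielding
$$u = (p_0^{(k)} + p_1^{(k-1)}) + (p_1^{(k)} + p_0^{(k-1)}) x^m,$$
so the low $m$ coefficients of $u$ are exactly $(fg)_{[k]}$.

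To evaluate $\fourier_{2m}(u)$ from the given data, I use $\fourier_{2m}(x^m)_j = e^{\pi i j} = (-1)^j$ together with the linearity and convolution properties of $\fourier_{2m}$, obtaining
$$\fourier_{2m}(u)_j = \sum_{i+i'=k} \fourier_{2m}(f_{[i]})_j \fourier_{2m}(g_{[i']})_j + (-1)^j \sum_{i+i'=k-1} \fourier_{2m}(f_{[i]})_j \fourier_{2m}(g_{[i']})_j$$
for each $j \in \{0, \ldots, 2m-1\}$. Computing all $2m$ such values takes $O(m(k+1))$ ring operations; a single inverse Fourier transform of size $2m$ then recovers $u$ in time $T(2m)$, and truncating to the first $m$ coefficients costs $O(m)$. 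The total is $T(2m) + O(m(k+1))$, as required. The main obstacle I expect is spotting the $x^m$-packing trick: a naive approach would invert the two convolutions $h_k$ and $h_{k-1}$ separately and pay for two inverse FFTs rather than one.
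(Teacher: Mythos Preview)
Your proof is correct and is essentially the same as the paper's: both combine $h_k$ and $h_{k-1}$ into a single length-$2m$ cyclic convolution via the $x^m$-shift (the $(-1)^j$ sign in Fourier space) and recover $(fg)_{[k]}$ with one inverse transform. The only cosmetic difference is that the paper phrases this through the middle product, forming $h_{k-1}+x^m h_k$ and reading off the \emph{second} half of the inverse transform, whereas you form $h_k+x^m h_{k-1}$ and read off the first half.
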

\begin{proof}
As shown in Figure \ref{fig:product}, we have
 \[ (fg)_{[k]} = \sum_{i=0}^k (f_{[k-i-1]} + f_{[k-i]} X) \rtimes_m g_{[i]}, \]
where for convenience we declare that $f_{[-1]} = 0$. Therefore $(fg)_{[k]}$ is obtained as the second half of
 \[ \fourier_{2m}^{-1}\left( \sum_{i=0}^k  \left(\fourier_{2m}(f_{[k-i-1]}) + \fourier_{2m}(f_{[k-i]})\fourier_{2m}(X)\right) \fourier_{2m}(g_{[i]})\right). \]
Since $(\fourier_{2m}(X))_j = (-1)^j$, the above expression may be computed from the $\fourier_{2m}(f_{[i]})$ and $\fourier_{2m}(g_{[i]})$ using $O(m(k+1))$ ring operations, followed by a single inverse transform of length $2m$.
\end{proof}

\begin{rem}
In Section \ref{sec:recip}, we will also need to compute expressions of the form $(fg + f'g')_{[k]}$, using the transforms of the blocks of $f$, $f'$, $g$ and $g'$ as input. Since the Fourier transform is linear, the same running time bound $T(2m) + O(m(k+1))$ applies (with a larger big-$O$ constant).
\end{rem}

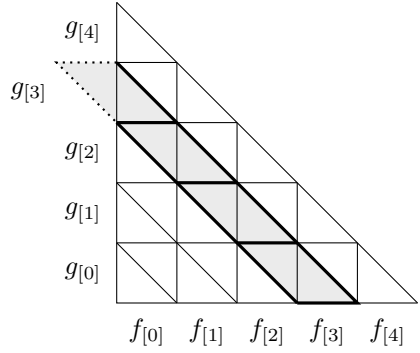
\begin{figure}
\begin{tikzpicture}[scale=0.8]

\fill[fill=black!8] (3,0) -- (4,0) -- (0,4) -- (-1,4) -- (3,0);

\draw (5,0) -- (0,0) -- (0,5);
\draw (0,5) -- (5,0);
\draw (0,4) -- (4,0);
\draw (0,3) -- (3,0);
\draw (0,2) -- (2,0);
\draw (0,1) -- (1,0);

\draw (0,1) -- (4,1);
\draw (0,2) -- (3,2);
\draw (0,3) -- (2,3);
\draw (0,4) -- (1,4);

\draw (1,0) -- (1,4);
\draw (2,0) -- (2,3);
\draw (3,0) -- (3,2);
\draw (4,0) -- (4,1);

\draw[very thick] (3,0) -- (4,0) -- (0,4);
\draw[thick,dotted] (0,4) -- (-1,4) -- (0,3);
\draw[very thick] (0,3) -- (3,0);
\draw[very thick] (2,1) -- (3,1);
\draw[very thick] (1,2) -- (2,2);
\draw[very thick] (0,3) -- (1,3);

\draw (0.5,-0.1) node[below] {$f_{[0]}$};
\draw (1.5,-0.1) node[below] {$f_{[1]}$};
\draw (2.5,-0.1) node[below] {$f_{[2]}$};
\draw (3.5,-0.1) node[below] {$f_{[3]}$};
\draw (4.5,-0.1) node[below] {$f_{[4]}$};

\draw (-0.1, 0.5) node[left] {$g_{[0]}$};
\draw (-0.1, 1.5) node[left] {$g_{[1]}$};
\draw (-0.1, 2.5) node[left] {$g_{[2]}$};
\draw (-1.0, 3.5) node[left] {$g_{[3]}$};
\draw (-0.1, 4.5) node[left] {$g_{[4]}$};

\end{tikzpicture}
\caption{Blockwise product of power series. Terms contributing to $(fg)_{[3]}$ are shaded.}
\label{fig:product}
\end{figure}

\section{Square root}
\label{sec:sqrt}

If $f = f_0 + f_1 x + f_2 x^2 + \cdots$ and $g = f^{1/2} = g_0 + g_1 x + g_2 x^2 + \cdots$, then the coefficients of $g$ may be determined by solving sequentially $g_0^2 = f_0$, $2 g_0 g_1 = f_1$, $2 g_0 g_2 = f_2 - g_1^2$, $2 g_0 g_3 = f_3 - 2 g_1 g_2$, $2 g_0 g_4 = f_4 - (2 g_1 g_3 + g_2^2)$, and so on. Algorithm \ref{algo:sqrt} applies this procedure at the level of blocks, retaining the Fourier transform of each computed block as it proceeds.

\begin{algorithm}
\label{algo:sqrt}
\dontprintsemicolon
\KwIn{$r \in \ZZ$, $r \geq 1$ \newline
$f \in \CCxx$, $f = 1 \bmod x$ \newline
$g_{[0]} = (f_{[0]})^{1/2} \bmod X$ \newline
$h = (g_{[0]})^{-1} \bmod X$
}\;
\KwOut{$g = g_{[0]} + \cdots + g_{[r-1]} X^{r-1} = f^{1/2} \bmod X^r$}\;
\medskip
Compute $\fourier_{2m}(h)$\;
\For{$1 \leq k < r$}{
   Compute $\fourier_{2m}(g_{[k-1]})$\;
   $\psi \assign ((g_{[0]} + \cdots + g_{[k-1]} X^{k-1})^2)_{[k]}$\;
   Compute $\fourier_{2m}(f_{[k]} - \psi)$\;
   $g_{[k]} \assign \frac12 h (f_{[k]} - \psi) \bmod X$\;
}\;
\caption{Square root}
\end{algorithm}

\begin{prop}
\label{prop:sqrt}
Algorithm \ref{algo:sqrt} is correct, and runs in time $(4r-3) T(2m) + O(r^2 m)$.
\end{prop}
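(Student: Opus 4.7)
The plan is to establish correctness by induction on $k$ using the block expansion of the identity $g^2 = f$, and then to bound the cost by counting Fourier transforms of length $2m$.

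For correctness, I would maintain the loop invariant that at the top of iteration $k$ we possess blocks $g_{[0]}, \ldots, g_{[k-1]}$ satisfying $\bigl(\sum_{i=0}^{k-1} g_{[i]} X^i\bigr)^2 \equiv f \pmod{X^k}$, together with the transforms $\fourier_{2m}(g_{[0]}), \ldots, \fourier_{2m}(g_{[k-2]})$ and $\fourier_{2m}(h)$. The base case $k = 1$ is the input hypothesis $g_{[0]}^2 = f_{[0]} \bmod X$. For the induction step, write $\tilde g = \sum_{i=0}^{k} g_{[i]} X^i$ with $g_{[k]}$ to be determined. Each product $g_{[i]} g_{[j]}$ has $x$-degree at most $2m - 2$, and splitting each such product into a low half (degree ${<}m$) and high half (degree ${<}m-1$, which spills into the next block) shows that
\[ (\tilde g^2)_{[k]} = \bigl(2 g_{[0]} g_{[k]} \bmod X\bigr) + \psi, \]
where the second summand collects everything independent of $g_{[k]}$ and therefore coincides with the quantity $\psi = \bigl(\bigl(\sum_{i=0}^{k-1} g_{[i]} X^i\bigr)^2\bigr)_{[k]}$ computed in line 4. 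Demanding $(\tilde g^2)_{[k]} = f_{[k]}$ and multiplying by $h/2 = (2g_{[0]})^{-1} \bmod X$ yields precisely the update in line 6.

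For the complexity, I would account for the Fourier transforms one by one. Line 1 performs one forward transform at cost $T(2m)$. In iteration $k$: line 3 is one forward transform costing $T(2m)$; line 4 invokes Lemma \ref{lem:block-product} with both factors equal to $\sum_{i=0}^{k-1} g_{[i]} X^i$ (formally padding the $k$-th block by zero), whose transforms are already stored, contributing $T(2m) + O(mk)$; line 5 is one forward transform at cost $T(2m) + O(m)$; line 6 computes $h(f_{[k]} - \psi) \bmod X$ by pointwise-multiplying the already-known $\fourier_{2m}(h)$ with $\fourier_{2m}(f_{[k]} - \psi)$, inverse-transforming, and discarding the high half, for an additional $T(2m) + O(m)$. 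Summing the initial transform plus four transforms per iteration over $r - 1$ iterations gives $T(2m) + 4(r-1)T(2m) + O(r^2 m) = (4r - 3) T(2m) + O(r^2 m)$.

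The only step requiring real care is the decomposition underlying the correctness argument: one must verify that $\psi$ as defined in the algorithm captures every contribution to $(\tilde g^2)_{[k]}$ that does not involve $g_{[k]}$, including the ``carry'' from block $k-1$ caused by products $g_{[i]} g_{[k-1-i]}$ of $x$-degree up to $2m - 2$. This is in fact automatic, because $\psi$ is itself defined as a block of a truncated square and so any such carry is included by construction, but it is the one subtlety in the proof.
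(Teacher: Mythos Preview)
Your argument is correct and follows essentially the same approach as the paper: induction on $k$ to derive the identity $2 g_{[0]} g_{[k]} \equiv f_{[k]} - \psi \pmod X$, followed by the same transform count of $1 + 4(r-1) = 4r - 3$. The only cosmetic difference is that the paper obtains this identity by subtracting the two congruences $(\sum_{i \le k-1} g_{[i]} X^i)^2 \equiv f \bmod X^{k+1}$ and $(\sum_{i \le k} g_{[i]} X^i)^2 \equiv f \bmod X^{k+1}$, whereas you isolate the $g_{[k]}$-contribution directly; both yield the same equation.
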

\begin{proof}
By definition $g_{[0]}$ is correct. In the $k$th iteration of the loop, assume that $g_{[0]}, \ldots, g_{[k-1]}$ have been computed correctly. Then we have
\[ \begin{aligned}
 (g_{[0]} + \cdots + g_{[k-1]} X^{k-1})^2 & = f_{[0]} + \cdots + f_{[k-1]} X^{k-1} + \psi X^k & & \bmod X^{k+1}, \\
 (g_{[0]} + \cdots + g_{[k]} X^k)^2 & = f_{[0]} + \cdots + f_{[k-1]} X^{k-1} + f_{[k]} X^k & & \bmod X^{k+1}.
\end{aligned} \]
Subtracting these yields $2 g_{[0]} g_{[k]} = f_{[k]} - \psi \bmod X$, so $g_{[k]}$ is computed correctly.

Each iteration performs one inverse transform to obtain $\psi$ (Lemma \ref{lem:block-product}), one to obtain $g_{[k]}$, and the two forward transforms explicitly stated. The total number of transforms is therefore $4(r-1) + 1 = 4r - 3$. The loop also performs $O(km)$ scalar operations in the $k$th iteration (Lemma \ref{lem:block-product}).
\end{proof}

\begin{thm}
\label{thm:sqrt}
The square root of a power series $f = 1 + f_1 x + \cdots \in \CCxx$ may be computed to order $n$ in time $(4/3 + o(1)) M(n)$.
\end{thm}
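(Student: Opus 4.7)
The plan is to apply Algorithm \ref{algo:sqrt} with an ultrasmooth block size $m$ and a slowly growing number of blocks $r$, then balance the terms in the bound from Proposition \ref{prop:sqrt}. Since $m \in S$, the hypothesis $T(2m) = (1/3 + o(1)) M(m)$ converts the bound $(4r-3) T(2m) + O(r^2 m)$ into
\[
\frac{4r-3}{3}\,(1+o(1))\, M(m) + O(r^2 m),
\]
on top of which there is the one-time precomputation of $g_{[0]} = (f_{[0]})^{1/2} \bmod X$ and $h = g_{[0]}^{-1} \bmod X$.

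I would pick $r = r(n)$ satisfying $r \to \infty$, $r = o(\log n)$, and $\log r = o(\log n)$ simultaneously; for concreteness, $r = \lfloor \log \log n \rfloor$ does the job. The density property that successive elements of $S$ have ratio tending to $1$ then lets me choose an ultrasmooth $m$ with $rm \geq n$ and $m = (n/r)(1 + o(1))$, so that the algorithm produces $f^{1/2} \bmod x^{rm}$, from which $f^{1/2} \bmod x^n$ is obtained by truncation.

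Three asymptotic checks remain. First, $(4r-3)/(3r) \to 4/3$ because $r \to \infty$, so the main term equals $(4/3 + o(1)) \cdot r M(m)$. Second, using $M(n) = \Theta(n \log n)$ from the FFT-based multiplication model, $r M(m)/M(n) \sim \log m / \log n = 1 - \log r / \log n \to 1$, since $\log r = o(\log n)$; hence the main term is $(4/3 + o(1)) M(n)$. Third, $r^2 m = O(rn) = o(n \log n) = o(M(n))$, since $r = o(\log n)$. Finally, the precomputations cost $O(M(m)) = O(M(n)/r) = o(M(n))$ via any standard square-root and reciprocal algorithm (or by recursing on this theorem). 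Summing yields the claimed bound $(4/3 + o(1)) M(n)$.

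The main subtlety is calibrating $r$ to satisfy all three constraints at once: $r$ must grow (so that the additive $-3T(2m)$ correction becomes negligible next to $4r T(2m)$), grow slowly enough that $r^2 m$ is swallowed by $M(n)$, and grow slowly enough that $\log r$ is negligible compared to $\log n$. Any moderate growth rate, such as $\lfloor \log \log n \rfloor$, satisfies all three simultaneously, and the remaining verifications are routine estimates.
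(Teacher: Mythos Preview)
Your argument is correct and follows essentially the same route as the paper: choose $r\to\infty$ with $r=o(\log n)$, pick an ultrasmooth block size $m\sim n/r$, invoke Proposition~\ref{prop:sqrt}, and convert $(4r-3)T(2m)+O(r^2 m)$ into $(4/3+o(1))M(n)$ via $T(2m)=(1/3+o(1))M(m)$ and $M(n)=\Theta(n\log n)$. Your extra hypothesis $\log r = o(\log n)$ is already implied by $r=o(\log n)$, so it is harmless but unnecessary.
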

\begin{proof}
Let $r \geq 1$, and let $m$ be the smallest ultrasmooth integer larger than $n/r$. We let $r$ grow slowly with respect to $n$; specifically we assume that $r \to \infty$ and $r = o(\log n)$ as $n \to \infty$. Then $m \to \infty$ as $n \to \infty$, so $m = (1 + o(1))n/r$. Zero-pad $f$ up to length $rm$. Compute $g_{[0]} = (f_{[0]})^{1/2} \bmod x^m$ and $h = (g_{[0]})^{-1} \bmod x^m$ using any $O(M(m))$ algorithm. Compute $f^{1/2} \bmod x^{rm}$, hence $f^{1/2} \bmod x^n$, using Algorithm \ref{algo:sqrt}. By Proposition \ref{prop:sqrt} the total cost is
\begin{align*}
 O(M(m)) + (4r-3) T(2m) + O(r^2 m)
  & = (4r/3 + O(1)) M(m) + O(r^2 m) \\
  & = (4/3 + O(r^{-1})) M(mr) + O(r^2 m) \\
  & = (4/3 + O(r^{-1})) M(n) + O(rn) \\
  & = (4/3 + o(1)) M(n),
\end{align*}
assuming that $M(n) = \Theta(n \log n)$.
\end{proof}

\begin{rem}
If $g_{[0]}$ and $h$ are computed using Bernstein's $(2.5 + o(1)) M(n)$ algorithm for the simultaneous computation of the square root and reciprocal square root \cite[p.~9]{bernstein}, then already for $r = 3$ the new algorithm matches the previous bound of $(1.8333\ldots + o(1)) M(n)$, and is strictly faster for $r \geq 4$.
\end{rem}

\begin{rem}
Let $f \in \CC[x]$ be a monic polynomial of degree $2n$. The above algorithm may be adapted to compute the square root with remainder, that is, polynomials $g, h \in \CC[x]$ with $\deg g = n$, $\deg h < n$, and $f = g^2 + h$, in time $(5/3 + o(1)) M(n)$.

For this, write $\tilde f(x) = x^{2n} f(1/x)$, $\tilde g(x) = x^n g(1/x)$, $\tilde h(x) = x^n h(1/x)$. Then $\tilde f, \tilde g, \tilde h \in \CCxx$, and we want to solve $\tilde f(x) = \tilde g(x)^2 + x^n \tilde h(x)$. First compute $\tilde g(x)$ using the above algorithm; to find $\tilde h(x)$ it then suffices to compute $\tilde g(x)^2$. Observe that at the end of Algorithm \ref{algo:sqrt}, we may compute $((\tilde g_{[0]} + \cdots + \tilde g_{[r-1]} X^{r-1})^2)_{[j]}$ for $r \leq j < 2r$ in time $r T(2m) + O(r^2 m)$ using Lemma \ref{lem:block-product}, since the transforms of the $\tilde g_{[j]}$ are all known. This increases the cost from $(4r - 3)T(2m) + O(r^2 m)$ to $(5r - 3)T(2m) + O(r^2 m)$, leading to the claimed bound in the same way as in the proof of Theorem \ref{thm:sqrt}.
\end{rem}

\section{Reciprocal}
\label{sec:recip}

Let $f = 1 + f_1 x + \cdots \in \CCxx$, and suppose that $g = f^{-1} \bmod x^n$. Then $fg = 1 + \delta x^n \bmod x^{3n}$ for some $\delta \in \CC[x]$, $\deg \delta < 2n$. Putting $g' = g(1 - \delta x^n + \delta^2 x^{2n})$, we have $g' = f^{-1} \bmod x^{3n}$. This is the third-order Newton iteration for the reciprocal.

The idea of Algorithm \ref{algo:recip} below is to use the above recipe at the level of blocks, with an additional twist. If we write $\delta = \delta_0 + \delta_1 x^n$, where $\deg \delta_i < n$, then $g' = g(1 - \delta_0 x^n + (\delta_0^2 - \delta_1) x^{2n}) \bmod x^{3n}$. The algorithm first computes $\delta_0$, applying Lemma \ref{lem:block-product} to compute the relevant blocks of $fg$. Then, instead of computing $\delta_0^2$ and $\delta_1$ separately, it computes the sum $\delta_0^2 - \delta_1$ in one pass, using only one inverse transform per block (see the remark after Lemma \ref{lem:block-product}). This is possible since $\delta_0$ is already completely known, and constitutes the main source of savings over Bernstein's algorithm.

\begin{algorithm}
\label{algo:recip}
\dontprintsemicolon
\KwIn{$s \in \ZZ$, $s \geq 1$ \newline
$f \in \CCxx$, $f = 1 \bmod x$ \newline
$g_{[0]} = (f_{[0]})^{-1} \bmod X$
}\;
\KwOut{$g = g_{[0]} + \cdots + g_{[3s-1]} X^{3s-1} = f^{-1} \bmod X^{3s}$}\;
\medskip\;
Compute $\fourier_{2m}(g_{[0]})$\;
\lFor{$0 \leq i < 3s$}{compute $\fourier_{2m}(f_{[i]})$}\;
\For{$1 \leq k < s$}{\nllabel{line:div-1}
   $\psi \assign ((f_{[0]} + \cdots + f_{[k]} X^k)(g_{[0]} + \cdots + g_{[k-1]} X^{k-1}))_{[k]}$\;
   Compute $\fourier_{2m}(\psi)$\;
   $g_{[k]} \assign - g_{[0]} \psi \bmod X$\;
   Compute $\fourier_{2m}(g_{[k]})$\nllabel{line:div-2}\;
}\;
\For{$0 \leq k < s$}{\nllabel{line:low-d-1}
$d_{[k]} \assign -((f_{[0]} + \cdots + f_{[3s-1]} X^{3s-1})(g_{[0]} + \cdots + g_{[s-1]} X^{s-1}))_{[k+s]}$\;
Compute $\fourier_{2m}(d_{[k]})$\nllabel{line:low-d-2}
}\;
\For{$s \leq k < 2s$}{$d_{[k]} \assign ((d_{[0]} + \cdots + d_{[s-1]} X^{s-1})^2)_{[k-s]}$\;
$\hspace{35pt} \mathord{} \mathbin{-} ((f_{[0]} + \cdots + f_{[3s-1]} X^{3s-1})(g_{[0]} + \cdots + g_{[s-1]} X^{s-1}))_{[k+s]}$\;
Compute $\fourier_{2m}(d_{[k]})$}\;
\For{$s \leq k < 3s$}{$g_{[k]} \assign ((d_{[0]} + \cdots + d_{[2s-1]} X^{2s-1})(g_{[0]} + \cdots + g_{[s-1]} X^{s-1}))_{[k-s]}$}\;
\caption{Reciprocal}
\end{algorithm}

\begin{prop}
\label{prop:recip}
Algorithm \ref{algo:recip} is correct, and runs in time $(13s - 3) T(2m) + O(s^2 m)$.
\end{prop}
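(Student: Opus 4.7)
The plan is to verify correctness loop by loop and then tabulate transforms. The first loop is a blockwise first-order Newton iteration analogous to Proposition~\ref{prop:sqrt}: by induction on $k$, if $g_{<k} := g_{[0]} + \cdots + g_{[k-1]} X^{k-1} \equiv f^{-1} \pmod{X^k}$, then $f g_{<k} \equiv 1 + \psi X^k \pmod{X^{k+1}}$ with $\psi$ as computed on line~5 (via Lemma~\ref{lem:block-product}), and $g_{[k]} = -g_{[0]} \psi \bmod X$ is the standard Sch\"onhage correction. After $s-1$ iterations, $g_{\text{low}} := g_{[0]} + \cdots + g_{[s-1]} X^{s-1}$ satisfies $g_{\text{low}} \equiv f^{-1} \pmod{X^s}$.

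For the remaining loops, I would write $f g_{\text{low}} = 1 + \delta X^s \bmod X^{3s}$ and split $\delta = \delta_0 + \delta_1 X^s$, so that the third-order Newton step of Section~\ref{sec:recip} reads $f^{-1} \equiv g_{\text{low}}\bigl(1 - \delta_0 X^s + (\delta_0^2 - \delta_1) X^{2s}\bigr) \pmod{X^{3s}}$. The second loop identifies $d_{[k]} = -(f g_{\text{low}})_{[k+s]} = -\delta_{0,[k]}$ for $0 \le k < s$; the third loop yields $d_{[k]} = (\delta_0^2 - \delta_1)_{[k-s]}$ for $s \le k < 2s$, exploiting that the low $d$-blocks already encode $-\delta_0$. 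Setting $d = \sum_{k=0}^{2s-1} d_{[k]} X^k$, a direct calculation modulo $X^{3s}$ shows $f^{-1} \equiv g_{\text{low}} + X^s g_{\text{low}} d \pmod{X^{3s}}$, so the fourth loop correctly produces $g_{[k]} = (g_{\text{low}} d)_{[k-s]}$ for $s \le k < 3s$.

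For the timing, I would tabulate by phase. Initialization uses $1 + 3s$ forward transforms. Each of the $s-1$ first-loop iterations uses four transforms (one inverse for $\psi$, one forward for $\fourier_{2m}(\psi)$, one inverse to recover $g_{[0]} \psi$ whose low half is $g_{[k]}$ using the cached $\fourier_{2m}(g_{[0]})$, and one forward for $\fourier_{2m}(g_{[k]})$), contributing $4(s-1)$. The second and third loops each contribute $2s$ (one inverse plus one forward per iteration), and the fourth loop contributes $2s$ inverse transforms. The sum $1 + 3s + 4(s-1) + 2s + 2s + 2s = 13s - 3$ matches the claim, and the $O(m(k+1))$ scalar cost accumulated from each invocation of Lemma~\ref{lem:block-product} totals $O(s^2 m)$.

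The main delicacy is the third loop: to keep it at just two transforms per iteration rather than three, one must invoke the remark following Lemma~\ref{lem:block-product} to compute the two products $(d_{[0]} + \cdots + d_{[s-1]} X^{s-1})^2$ and $(f)(g_{\text{low}})$ in a single shared inverse transform of length $2m$ per block, taking their difference in the Fourier domain. This combined-inverse-transform trick is precisely the ``main source of savings'' flagged in the discussion preceding Algorithm~\ref{algo:recip}, and it is what keeps the leading constant at $13s-3$ rather than closer to $14s$ or $15s$.
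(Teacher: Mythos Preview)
Your proposal is correct and follows essentially the same route as the paper's proof: the same induction for the first loop, the same identification $d = -\delta_0 + (\delta_0^2-\delta_1)X^s \bmod X^{2s}$, the same appeal to the remark after Lemma~\ref{lem:block-product} to keep the third loop at two transforms per iteration, and the same transform tally summing to $13s-3$. Your presentation is slightly more explicit in breaking down the transform count phase by phase, but the argument is the paper's.
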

\begin{proof}
By definition $g_{[0]}$ is correct. Lines \ref{line:div-1}--\ref{line:div-2} compute $g_{[1]}, \ldots, g_{[s-1]}$ as follows. In the $k$th iteration of the loop, assume that $g_{[0]}, \ldots, g_{[k-1]}$ are correct. Then
\[ \begin{aligned}
 (f_{[0]} + \cdots + f_{[k]} X^k)(g_{[0]} + \cdots + g_{[k-1]} X^{k-1}) & = 1 + \psi X^k & & \bmod X^{k+1}, \\
 (f_{[0]} + \cdots + f_{[k]} X^k)(g_{[0]} + \cdots + g_{[k]} X^k) & = 1 & & \bmod X^{k+1}.
\end{aligned} \]
Subtracting yields $f_{[0]} g_{[k]} = -\psi \bmod X$, so $g_{[k]}$ is computed correctly. (This loop is essentially van der Hoeven's division algorithm, applied to compute $1/f \bmod X^s$.)

Now we use the symbols $\delta$, $\delta_0$, $\delta_1$ introduced earlier, putting $n = sm$. After lines \ref{line:low-d-1}--\ref{line:low-d-2} we have $d_{[0]} + \cdots + d_{[s-1]} X^{s-1} = -\delta_0$, and the subsequent loop computes $d_{[s]} + \cdots + d_{[2s-1]} X^{s-1} = \delta_0^2 - \delta_1 \bmod X^s$. Therefore $d_{[0]} + \cdots + d_{[2s-1]} X^{2s-1} = - \delta + \delta^2 X^s \bmod X^{2s}$. The final loop computes the appropriate blocks of $g' = g(1 - \delta X^s + \delta^2 X^{2s}) \bmod X^{3s}$.

Altogether the algorithm performs $1 + 3s + 2(s-1) + s + s$ forward transforms, $2(s-1) + s + s + 2s$ inverse transforms, and $O(s^2 m)$ scalar operations (apply Lemma \ref{lem:block-product} to each loop).
\end{proof}

\begin{thm}
\label{thm:recip}
The reciprocal of a power series $f \in \CCxx$ may be computed to order $n$ in time $(13/9 + o(1)) M(n)$.
\end{thm}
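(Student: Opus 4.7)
My plan is to follow the template of the proof of Theorem~\ref{thm:sqrt}, substituting Algorithm~\ref{algo:recip} and Proposition~\ref{prop:recip} for Algorithm~\ref{algo:sqrt} and Proposition~\ref{prop:sqrt}. Since Algorithm~\ref{algo:recip} produces its output to $3s$ blocks, the target block count is $3s$, playing the role of $r$ in the square root proof. I will choose $s = s(n) \to \infty$ with $s = o(\log n)$, and let $m$ be the smallest ultrasmooth integer satisfying $m \geq \lceil n/(3s) \rceil$. Since successive ultrasmooth integers have ratio tending to $1$, this yields $m = (1 + o(1))\,n/(3s)$, so $3sm = (1 + o(1))\,n$ and in particular $m \to \infty$.

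After normalising so that $f = 1 \bmod x$ and zero-padding to length $3sm$, I will compute $g_{[0]} = (f_{[0]})^{-1} \bmod X$ using any $O(M(m))$ reciprocal algorithm, then invoke Algorithm~\ref{algo:recip} to obtain $f^{-1} \bmod X^{3s}$, from which $f^{-1} \bmod x^n$ can be read off. By Proposition~\ref{prop:recip} together with $T(2m) = (1/3 + o(1))\,M(m)$ for ultrasmooth $m$, the total running time is
\[
  O(M(m)) + (13s - 3)\, T(2m) + O(s^2 m) = (13s/3 + O(1))\, M(m) + O(s^2 m).
\]

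To convert this into a bound in terms of $M(n)$, I will invoke the quasilinearity $M(n) = \Theta(n \log n)$. The condition $s = o(\log n)$ forces $\log s = o(\log m)$, whence $(3s)\, M(m) = (1 + o(1))\, M(3sm) = (1 + o(1))\, M(n)$; therefore $(13s/3)\, M(m) = (13/9 + o(1))\, M(n)$. The remainder $O(s^2 m) = O(sn)$ is $o(n \log n) = o(M(n))$ since $s = o(\log n)$, and $O(M(m)) = o(M(n))$ since $s \to \infty$. Combining these gives the claimed $(13/9 + o(1))\, M(n)$ bound.

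The only substantive step is the asymptotic manipulation of $s\, M(m)$ into a multiple of $M(n)$, but this is identical to the manipulation already carried out in the proof of Theorem~\ref{thm:sqrt}, so no real obstacle is expected.
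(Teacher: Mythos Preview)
Your proposal is correct and follows exactly the paper's approach: the paper's proof is the single sentence ``Apply the proof of Theorem~\ref{thm:sqrt} to Proposition~\ref{prop:recip}, with $r = 3s$,'' and you have simply written out that application in full detail. The asymptotic manipulations you perform are precisely those carried out in the proof of Theorem~\ref{thm:sqrt}.
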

\begin{proof}
Apply the proof of Theorem \ref{thm:sqrt} to Proposition \ref{prop:recip}, with $r = 3s$.
\end{proof}

\begin{rem}
If $g_{[0]}$ is computed using a $(1.5 + o(1)) M(n)$ algorithm, the new algorithm achieves the same bound for $s = 3$ ($r = 9$), and is faster for $s \geq 4$ ($r \geq 12$).
\end{rem}

\begin{rem}
A natural question is whether the key idea of Algorithm \ref{algo:recip} can be extended to Newton iterations of arbitrarily high order. That is, if $fg = 1 + \delta x^n$, is it possible to compute $1 - \delta x^n + \delta^2 x^{2n} \cdots \pm \delta^{k-1} x^{(k-1)n} \bmod x^{kn}$ in essentially the same time as $1 + \delta x^n \bmod x^{kn}$ itself, for arbitrary $k$? Algorithm \ref{algo:recip} corresponds to the case $k = 3$. An affirmative answer for arbitrary $k$ would presumably lead to a $(1.333\ldots + o(1)) M(n)$ algorithm for the reciprocal.
\end{rem}

\section*{Acknowledgments}

Many thanks to Paul Zimmermann for his suggestions that greatly improved the presentation of these results.

\bibliographystyle{amsalpha}
\bibliography{fast-series}

\end{document}